\documentclass[12pt]{article}
\usepackage{amsmath,amsthm, enumitem, dcolumn, graphicx, tikz,pdfpages, caption, float}

\usepackage[ backend=bibtex, style=authoryear-comp, isbn=false,url=false,doi=false,eprint=false,sorting=nyt]{biblatex}
\DeclareSourcemap{
	\maps[datatype=bibtex]{
		\map[overwrite]{
			\step[fieldsource=doi, final]
			\step[fieldset=url, null]
			\step[fieldset=eprint, null]
			\step[fieldset=issn, null]
			\step[fieldset=ISSN, null]
			\step[fieldset=isbn, null]
		}  
	}
}
\DeclareUnicodeCharacter{2009}{ }
\usetikzlibrary{trees}
\DeclareMathOperator*{\plim}{plim}
\newtheorem{theorem}{Proposition}
\def\eolqed{\hspace{\stretch1}\ensuremath\qedsymbol}
\DeclareUnicodeCharacter{2606}{\ensuremath{\star}}

\title{Estimating Demand for a New Product}
\author{Sizhong Sun \\ James Cook University \\ Townsville, Qld, Australia \\ Email: sizhong.sun@jcu.edu.au
	\thanks{The author would like to thank seminar participants of College of Business Law and Governance, James Cook University, and University of Tasmania for the comments and suggestions. All errors remain the author's.}}

\begin{document}
	\maketitle
	
	\begin{abstract}
		This paper develops an approach for estimating demand for a new product. Taking willingness to pay (WTP) as primitive, it establishes a general and yet analytically simple demand function, and proposes an estimation procedure that consistently recovers the underlying demand function from the WTP data. Monte Carlo simulations find the estimation procedure works well in identifying the demand function. This approach complements existing methods of demand estimation, and can be applied both within and outside academia, for example in teaching economics, for a business to launch new products, and for policymakers to conduct non-market valuation.   
		
		Keywords: Demand Estimation, Willingness to Pay, New Product Demand
		
		JEL Classification: D12, C21, A22
		
		\pagebreak
	\end{abstract}
	
	\section{Introduction}
	As a fundamental concept, there are few fields in economics where the demand function does not play an essential role. In addition, demand is also important outside academia. For example, a business will be interested to understand the demand for its products, and a government agency will be keen to know the public's demand for its service. In light of its importance, researchers have also long been searching for different methods of demand estimation, resulting in a large volume of existing literature tackling different aspects of demand estimation.\footnote{See for examples \textcite{10.2307/2233510}, \textcite{doi:10.1111/b.9780631216339.1999.00005.x}, \textcite{BARNETT2008210}, \textcite{BERRY20211}, and \textcite{GANDHI202163} for surveys.}
	
	Demand estimation frequently starts with a particular form of the demand function, derived from consumers' utility maximization, and use observational data to estimate parameters of interest. The Cobb-Douglas (logarithmic linear) demand function and the constant elasticity of substitution (CES) demand function are widely used in practice, possibly due to the fact that they globally satisfy the regularity conditions required for utility maximization. Other functional forms include the generalized Leontief function \parencite{DIEWERT1973284}, the translog functional form \parencite{10.2307/1804840}, the almost ideal demand system (AIDS) \parencite{10.2307/1805222}, the minflex Laurent model \parencite{BARNETT198331, BARNETT198533, 10.2307/1913216, BARNETT19853, BARNETT1987281, doi:10.1080/07350015.1994.10524565}, the quadratic AIDS \parencite{doi:10.1162/003465397557015}, the normalized quadratic reciprocal indirect utility function \parencite{doi:10.1080/07350015.1988.10509667}, the normalized quadratic expenditure function \parencite{doi:10.1080/07350015.1988.10509667}.\footnote{More details of these functions and a survey of non-parametric and semi-parametric approaches of demand analysis can be found in \textcite{BARNETT2008210}. A comparison of the performance of these functions can be found in \textcite{doi:10.1002/jae.585} and \textcite{HOLT2002117}.}
	
	Researchers have also estimated demand as a choice modeling. For example, to estimate demand for differentiated products, consumer behavior is modeled as choosing a product from a set of available options to maximize their utility, resulting in a market share as a function of a set of explanatory variables. Among others, \textcite{10.2307/2171802} provide a framework for the choice modeling approach of demand analysis. \textcite{doi:10.3982/ECTA9027} discuss the identification results using market level data, and \textcite{https://doi.org/10.3982/ECTA20731} focus on a nonparametric identification using micro data. \textcite{FREYBERGER2015162}, \textcite{10.2307/2171802}, and \textcite{10.1111/j.1467-937X.2004.00298.x} develop asymptotic theories for the estimation of differentiated product demand models with respect to the number of markets and products. \textcite{https://doi.org/10.3982/QE1593} address the issue of many zeroes in market share data in estimating differentiated product demand systems. It shall be noted that the choice modeling of consumers' behavior implies a willingness to pay (WTP) by consumers for a particular product. \textcite{doi:10.1002/jae.1187} shows a duality between market shares and consumer types.
	
	In demand estimations, one typically utilizes observational data. Such observational data are outcomes of market equilibrium, namely the observed prices are functions of the market supply and demand, resulting in an endogeneity issue in estimations. To address the endogeneity issue, researchers typically use the supply-side factors as excluded instruments. 
	
	Observational data are not available if the product studied is a new product. As such, to estimate the demand function for a new product, researchers typically rely on the survey data, where respondents are asked to report the quantity of demand, given a price and a set of features of the studied product \parencite [See for example] [] {KLEIN199753, CampbellNewproductdemand}.\footnote{A related strand of studies is to forecast the demand for a new product, for example \textcite{BROCKHOFF1993211}, \textcite{LEE20121280}, \textcite{VANSTEENBERGEN2020113401}, \textcite{Yamamura04072022}, and \textcite{SOUSA2025125313}.} 
	
	In this study, I propose an alternative approach for estimating the demand function of a new product. I take consumers' WTP for a goods as model primitive, and establish a demand function under three behavioral assumptions. With WTP data, I first non-parametrically estimate its cumulative distribution function (CDF) and the quantity of demand. Then the ordinary least squares (OLS) estimator can be applied to estimate the demand function. By construction, the OLS estimation produces consistent parameter estimate, as the right-hand-side variables (e.g. price) are asymptotically orthogonal to the error term. Monte Carlo simulations find that this approach works well in identifying the underlying parameters of interest.
	
	This research starts from consumers' WTP, and hence extends existing literature on measuring WTP. Methods to measure consumers' WTP can be broadly categorized from two dimensions, namely whether the context is real or hypothetical, and whether the measurement is direct or indirect. Measurements of actual WTP can be obtained from lotteries, auctions (for example the Vickrey auction), and other market transactions (revealed preference data). Auctions can reveal consumers' WTP directly, while using the revealed preference data researchers can uncover WTP indirectly. In the hypothetical context, researchers utilize the contingent valuation method (CVM) \parencite{doi.org/10.4324/9781315060569} to directly solicit WTP. Indirectly, the conjoint analysis \parencite{10.1086/208721} is a widely used method to determine WTP. Researchers have discussed the strength and weakness of these methods and extended them in different directions \parencite [See for example, among others,] [] {doi.org/10.1007/s11747-019-00666-6, doi:10.1509/jmkr.48.1.172, LOUVIERE201057, 10.1257/jep.26.4.27, doi:10.1287/mksc.1070.0321, doi:10.1007/s11002-006-5147-x}. 
	
	This study contributes to existing literature from two aspects. First, the proposed demand function is general and yet analytically simple, and can be readily used in other more sophisticated economic modelling. Under specific assumptions on the distribution of WTP, the demand function reduces to commonly used functional forms. For example, if WTP is uniformly distributed, the implied demand is linear, while with Pareto distributed WTP, the associated demand function is logarithmic linear. 
	
	Second, the estimation procedure I propose is easy to be implemented, and can be applied in many situations, both within and outside academic arenas. Within the academic arena, one can use this approach in teaching economics. In class, for example, a lecturer can quickly survey students' WTP for a cup of coffee, and use a spreadsheet program to intuitively illustrate and estimate the demand for coffee. By construction (see the next section), the estimated demand will satisfy the Law of Demand. In the public sector, a government agency can utilize this approach to estimate a demand for health service, which can then be used to evaluate a certain health policy.\footnote{For example, \textcite{SUN2022101135} estimates the demand for a COVID-19 vaccine in China and UAE, and compute the consumer surplus of vaccination.} In the private sector, a business can use this approach to estimate the demand for its new products, which shall help the business to achieve a better performance.
	
	The study is linked to existing literature in four dimensions. First, the choice modelling of demand implies consumers' WTP for a goods, and as such the study is linked to this strand of research, despite I take WTP as primitive. Second, this study is also linked to those demand studies that estimate different functional forms of demand, using observational data, in that the demand function I propose can imply these functional forms. Compared with estimating demand from observational data, this approach is not subject to the potential endogeneity issue that arises from the fact that the observational data are market equilibrium outcomes. Third, in particular, this study contributes to estimating the demand function for a new product. Without observational data available, one can, for example, implement a Vickery auction to solicit consumers' WTP, where truthful bidding a weakly dominant strategy. With WTP data, one can then proceed to estimate the demand function. Fourth, this study is related to the strand of research that measures WTP, in that it requires WTP data as input. A number of methods can be used to measure consumers' WTP, as discussed in the previous paragraph. 
	Later, as an example, I will use the contingent valuation method (CVM) to solicit WTP.	However, it shall be noted that this study is not about the methods of measuring WTP themselves. 
	
	The method this study proposes intends to complement the existing methods of demand estimation. If data are available, one can estimate the demand function using both methods in order to cross-check the estimations. Besides, since I focus on new products,\footnote{Though it can also be used to estimate the demand function of existing products, should WTP data be available.} the demand function is necessarily static, as consumers have not yet consumed the products. Therefore, this study is not related to the dynamic demand estimation,\footnote{For example see, among others, \textcite{DESCHAMPS1998335}, \textcite{KHWAJA2010130} and \textcite{doi:10.1111/1756-2171.12194}.} which is an area for future research.         
	
	The rest of the paper is organized into six sections. In Section two, I establish the theoretical demand function, which guides the estimator in Section three. Section four discusses the possible mis-reporting of WTP and the robustness of the estimator. Section five reports a number of Monte Carlo simulations that test the effectiveness of the estimator. In Section six, I estimate a demand for coffee with survey that solicits consumers' WTP. Section seven concludes the paper.
	
	\section{The Demand Function}
	In order to establish the demand function, I start with the following behavioral assumptions:

	\begin{itemize}[rightmargin=20pt, leftmargin=40pt, itemsep=0pt, parsep=0pt]
		\item [A1:] A consumer is allowed to purchase one unit of the goods;
		\item [A2:] Consumers have willingness to pay (WTP), denoted by $W$, for the goods. $W$ is randomly distributed with a conditional cumulative distribution function (CDF) $F_{W|Z}(w)$, where $Z$ is an index that captures all other factors, such as income and prices of substitute and complement goods, that affect a consumer's WTP\footnote{Alternatively, $Z$ can be a vector of all the other factors. In such case, the design of $Z$ is subject to the curse of dimensionality.}, and $W|Z$ represents random variable $W$ conditional on random variable $Z$;
		\item [A3:] A consumer will purchase the goods if its price is lower than her/his WTP. 
	\end{itemize}
	
	Assumption A1 is harmless. It holds in may situations. For example, within a short time frame, a consumer is willing to purchase only one cup of coffee. Besides, as will be shown later, one can easily relax this assumption. Assumption A3 can also be relaxed by assuming that a consumer has a probability of actually purchasing the goods if her/his WTP is higher than the price, where this probability does not depend on price. 
	
	In Assumption A2, I take WTP as the model primitive. However, one can show that there exists WTP for a goods, using the utility function as primitive. Let consumer \textit{i}'s preference be represented by a continuous and quasi-concave utility function $U(\chi, \textbf{q})$, where $\chi\in\{0,1\}$, taking a value of one if the consumer consumes the goods and zero otherwise, and $\textbf{q}$ denotes a bundle of all the other goods\footnote{Alternatively, $\textbf{q}$ can be a vector of all the other goods.}. Subject to her/his budget constraint, the consumer chooses a bundle of goods $(\chi,\textbf{q})$ to maximize her/his utility, namely $\max_{\{\chi,\textbf{q}\}} U(\chi, \textbf{q})$ s.t. $p\chi+P\textbf{q} \leq I, p>0, P>0$, where $p$ denotes price of the goods, $P$ is the price of the bundle of all the other goods, and $I$ represents income. 
	
	If the consumer decides not to purchase the goods, her/his utility maximization problem reduces to $\max_{\{\textbf{q}\}} U(0, \textbf{q})$ s.t. $P\textbf{q} \leq I, P>0$. Let $V_{i0}(P,I)$ be the associated maximized value, and by the Maximum Theorem, $V_{i0}$ is continuous with respect to $(P,I)$. Similarly, if the consumer decides to purchase the goods, her/his problem then becomes $\max_{\{\textbf{q}\}} U(1, \textbf{q})$ s.t. $p + P\textbf{q} \leq I, p>0, P>0$. The associated maximized value is denoted as $V_{i1}(p,P,I)$, which is a continuous function of $(p,P,I)$ by the Maximum Theorem. To decide whether to purchase the goods, the consumer compares $V_{i0}(P,I)$ with $V_{i1}(p,P,I)$. That is, $\chi =1(V_{i1}(p,P,I) > V_{i0}(P,I))$, where $1()$ is an indicator function. 
	
	Since the consumer values the goods, namely $U(1, \textbf{q}) > U(0, \textbf{q})$ for any $\textbf{q}$, $\lim_{p\rightarrow0} V_{i1}(p,P,I) > V_{i0}(P,I)$ for any $(P,I)$. Together with the fact that $V_{i1}$ is a decreasing function of $p$, there exists a $\overline{p}$ such that $\forall p>\overline{p}, \chi=0$. Hence, $\overline{p}$ can be defined as the consumer's WTP. Note that different consumers have different WTP, and it is assumed that WTP can be characterized by a distribution. 
	
	Let $L$ denote the number of consumers in the market. The behavioral assumptions, A1-A3, imply the following demand function if $Z$ is a discrete random variable:
	\begin{equation}
	q=q(p,z)=[1-F_{W|Z=z}(p)]\Pr(Z=z)L
	\end{equation}
	where $q$ is the quantity of demand, and $F$ denotes the CDF. Note that since a CDF is a non-decreasing function, Equation (1) satisfies the law of demand. If $Z$ is a continuous random variable that admits a probability density function $f_{Z}(z)$, let $Q=Q(p,z,h)=[1-F_{W|Z\in[z-h,z+h]}(p)][F_{Z}(z+h)-F_{Z}(z-h)]L$ which is the demand conditional on $Z\in[z-h,z+h]$. Then conditional on $Z=z$, the demand function can be written as follows:
	\begin{equation}
	\begin{split}
	q=q(p,z)&=\underset{h\rightarrow0}{\lim}\frac{Q(p,z,h)}{2h} \\
	&=\underset{h\rightarrow0}{\lim}\frac{[1-F_{W|Z\in[z-h,z+h]}(p)][F_{Z}(z+h)-F_{Z}(z-h)]L}{2h} \\
	&=[1-F_{W|Z}(p)]f_{Z}(z)L
	\end{split}
	\end{equation}
	where similar to Equation (1), the Law of Demand is satisfied by construction.
	
	In addition, Equations (1) and (2) are very general Marshallian demand functions in that with different distributions of WTP and $Z$, they will reduce to demand functions that are commonly used. For example, if $W|Z\sim U[0,\overline{w}]$ and $Z\sim U[0,\overline{z}]$, the implied demand function is linear, namely $q=q(p,z)=q(p)=\beta_{0}+\beta_{1}p$, where $\beta_{0}=\frac{L}{\overline{z}}$ and $\beta_{1}=-\frac{L}{\overline{w}\overline{z}}$. If $W|Z\sim Exp(\lambda)$ and $Z^\alpha\sim U[0,\overline{z}]$, the demand function becomes semi-logarithmic linear, namely $\ln(q)=\beta_{0}+\beta_{1}p+\beta_{2}\log(z)$, where $\beta_{0}=\log(\frac{\alpha L}{\overline{z}})$, $\beta_{0}=-\lambda$ and $\beta_{2}=\alpha - 1$. If $W|Z\sim Pareto(\underline{w},a)$ over the support $[\underline{w},+\infty]$ with a shape parameter of $a$ and $Z^\alpha\sim U[0,\overline{z}]$, the associated demand function reduces to a  logarithmic linear function,  $\ln(q)=\beta_{0}+\beta_{1}\log(p)+\beta_{2}\log(z)$, where $\beta_{0}=\log(\frac{\alpha \underline{w}^a L}{\overline{z}})$, $\beta_{0}=-a$ and $\beta_{2}=\alpha - 1$. 
	
	If the distributions of WTP and $Z$ are unknown, assuming the CDF of WTP and PDF of $Z$ are smooth, one can approximate the demand function by Taylor expansion at $(p_{0},z_{0})$, an expansion point chosen by researchers\footnote{Note one can also first take logarithm at both sides of Equations (1) and (2) and expand at $(\log(p_{0}),\log(z_{0}))$.}. A first-order Taylor expansion yields a linear demand function, as follows:
	\begin{equation}
	q=q(p,z)\approx\beta_{0}+\beta_{1}p+\beta_{2}z
	\end{equation}  
	where $\beta_{0}=q(p_{0},z_{0})-p_{0}\frac{\partial q(p_{0},z_{0})}{\partial p}-z_{0}\frac{\partial q(p_{0},z_{0})}{\partial z}$, $\beta_{1}=\frac{\partial q(p_{0},z_{0})}{\partial p}$, and $\beta_{2}=\frac{\partial q(p_{0},z_{0})}{\partial z}$. A second-order Taylor expansion results in the following quadratic demand function:
	\begin{equation}
	q=q(p,z)\approx\beta_{0}+\beta_{1}p+\beta_{2}p^2+\beta_{3}z+\beta_{4}z^2+\beta_{5}pz
	\end{equation} 
	where 
	$\beta_{0}=q(p_{0},z_{0})
	-p_{0}\frac{\partial q(p_{0},z_{0})}{\partial p}
	+p_{0}^2\frac{\partial ^2 q(p_{0},z_{0})}{\partial p^2}
	-z_{0}\frac{\partial q(p_{0},z_{0})}{\partial z}
	+z_{0}^2\frac{\partial ^2 q(p_{0},z_{0})}{\partial z^2}
	+p_{0}z_{0}\frac{\partial ^2 q(p_{0},z_{0})}{\partial p \partial z}
	$, 
	$\beta_{1}=\frac{\partial q(p_{0},z_{0})}{\partial p}
	-2p_{0}\frac{\partial ^2 q(p_{0},z_{0})}{\partial p^2}
	-z_{0}\frac{\partial ^2 q(p_{0},z_{0})}{\partial p \partial z}
	$,
	$
	beta_{2}=\frac{\partial ^2 q(p_{0},z_{0})}{\partial p^2}
	$, 
	$\beta_{3}=\frac{\partial q(p_{0},z_{0})}{\partial z}
	-2z_{0}\frac{\partial ^2 q(p_{0},z_{0})}{\partial z^2}
	-p_{0}\frac{\partial ^2 q(p_{0},z_{0})}{\partial p \partial z}
	$,
	$
	\beta_{4}=\frac{\partial ^2 q(p_{0},z_{0})}{\partial z^2}
	$, and 
	$\beta_{5}=\frac{\partial ^2 q(p_{0},z_{0})}{\partial p \partial z}
	$. In order to achieve a higher degree of accuracy, one can, not surprisingly, increase the order of expansion, provided that the CDF and PDF are sufficiently smooth.
	
	To relax the assumption A1, suppose a consumer is allowed to purchase $M$ units of goods, where $M$ is a random variable. If $M|Z$ is discretely distributed over the support $\{0,1,\ldots,\overline{M}\}$, the associated demand function can be derived as follows:
	\begin{align*}
	q&=q(p,z)=\sum_{m=0}^{\overline{M}}\Pr(M=m|Z)[1-F_{W|Z,M}(p)]f_{Z}(z)Lm \\
	&= [1-F_{W|Z}(p)]f_{Z}(z)L\sum_{m=0}^{\overline{M}}\Pr(M=m|Z)m
	\end{align*}
	where it is assumed that $W$ (WTP) is independent of $M$ in the third equality. Similarly, if $M|Z$ is a continuous random variable with a probability density function (PDF) $f_{M|Z}(m)$ over the support $[0,\overline{M}]$, the corresponding market demand function can be written as:
	\begin{align*}
	q&=q(p,z)=\int_{m=0}^{\overline{M}}f_{M|Z}(m)[1-F_{W|Z,M}(p)]f_{Z}(z)Lm dm \\
	&= [1-F_{W|Z}(p)]f_{Z}(z)L\int_{m=0}^{\overline{M}}f_{M|Z}(m)m dm
	\end{align*}
	Note $Pr(M=m|Z)$ and $f_{M|Z}(m)$ can be functions of $Z$. For example, as income increases, a consumer is willing to purchase more goods, \textit{ceteris paribus}.    
	
	\section{The Estimator}
	To empirically estimate the demand function (Equations 1 and 2), one can utilize an estimation procedure outlined in Section 3.1. In brief, one needs to obtain the WTP data, for example by using the Vickrey auction or CVM to solicit consumers' WTP. With the WTP data, one non-parametrically estimates the CDF of WTP and PDF of $Z$ (or probability mass function of $Z$ if it is a discrete random variable), which is then used to construct the quantity of demand. The ordinary least squares (OLS) estimator can then be used to estimate the parameters of interest. 
	
	The estimation procedure depends on consumers revealing their WTP truthfully. Some methods can achieve this aim better than the others. For example, in the Vickrey auction,truthfully bidding is a weakly dominant strategy. Nevertheless, in order to check the robustness of the estimation procedure, I discuss mis-reporting of WTP in Section 3.2. Under some assumptions, it can be shown that the estimation procedure is robust to small mis-reporting of WTP.
	
	\subsection{Estimation procedure}
	The estimation procedure consists of the following five steps:
	\begin{itemize}[rightmargin=30pt, leftmargin=50pt, itemsep=0pt, parsep=0pt]
		\item [Step 1:] Obtain data of WTP and $Z$, $\{(w_{i},z_{i}): i=1,\ldots,n\}$;
		\item [Step 2:] Conditional on $Z$, use the data from Step 1 to estimate $F_{W|Z}(w)$ as $\hat{F}_{W|Z}(w)=\frac{\sum_{i=1}^{n}1(w_{i}\leq w)}{n}$;
		\item [Step 3:] Non-parametrically estimate $\hat{f}_{Z}(z)$ (or $\Pr(Z=z)$ if $Z$ is a discrete random variable);
		\item [Step 4:] Estimate the quantity of demand: $\forall p\in\{\bar{w}_{1},\bar{w}_{2},\ldots,\bar{w}_{N}\}, \hat{q}=[1-\hat{F}_{W|Z}(p)]\hat{f}_{Z}(z)L$, where $L$ is the number of consumers in the market, a known parameter,\footnote{Note here I implicitly assume that a consumer, faced with a price at his/her WTP, will not purchase the goods. Alternatively one can assume that in such a situation the consumer will purchase the goods, in which case $\hat{F}_{W|Z}(w)=\frac{\sum_{i=1}^{n}1(w_{i} < w)}{n}$} and $\{\bar{w}_{1},\bar{w}_{2},\ldots,\bar{w}_{N}\}$ is the set of distinct values in $\{w_{1},w_{2},\ldots,w_{n}\}$ with $N \leq n$ and $\underset{n\rightarrow\infty}{\lim} \frac{N}{n} > 0$;
		\item [Step 5:] Choose $\beta$ to minimize the sum of squared errors between the quantities of demand and their estimates: $\beta=\arg\min\sum_{i=1}^{N}(q_{i}-\hat{q}_{i})^2$. 
	\end{itemize}
	
	Step 1 can be implemented using different methods, for example the conjoint analysis, auctions, and contingent valuation method\footnote{Using different methods frequently amounts to whether to use stated or revealed preference data. Among others, a comparison can be found in \textcite{https://doi.org/10.1002/hec.4246}.}. Later, as an example, I will use the CVM to solicit consumers' WTP. The CVM particularly suites commodities that do not have real markets, and has been extensively used in such situations. \textcite{doi.org/10.4324/9781315060569}, \textcite{doi:10.1021/es990728j}, \textcite{CARSON2005821}, and \textcite{10.1257/jep.26.4.27} provide detailed guidelines on conducting CVM. 
	
	In conducting CVM, the survey shall be designed to include/control $Z$. For example, if a consumer's income is likely to affect her/his WTP, the questionnaire shall ask the interviewees to report their income. Besides, as in any survey, one needs to carefully consider the non-sampling errors\footnote{See, for example, \textcite{Tanur2011} for a discussion.}. For example, if the survey starts by asking whether the interviewee is willing to pay a certain amount of money for the goods, there is likely an anchor point effect. That is, the interviewee's reported WTP will be correlated with the starting figure in the survey. To overcome this anchor point effect, one can randomize the starting figure in the survey.
	
	In Step 2, by the Glivenko-Cantelli Theorem, the empirical CDF converges to the underlying CDF, namely $\underset{n\rightarrow\infty}{\lim}\hat{F}_{W|Z}(w)=F_{W|Z}(w)$. Similarly, in Step 3, $\hat{f}_{Z}(z)$ converges to $f_{Z}(z)$. Accordingly, in Step 4, the estimated quantity of demand converges to the true quantity of demand, namely $\underset{n\rightarrow\infty}{\lim}\hat{q}=q$.
	
	Let $\textbf{X}$ denote the $N\times k$ matrix of explanatory variables, $\hat{\textbf{Y}}$ denote $N\times 1$ vector of dependent variables (the estimated quantity of demand), and $\textbf{Y}$ denote $N\times 1$ vector of the true (unobserved) quantity of demand, as follows:
	\begin{align*}
	\textbf{X} =
	\begin{pmatrix}
	p_{1}&z_{1}\\
	p_{2}&z_{2}\\
	\vdots  & \vdots\\
	p_{N}&z_{N}\\
	\end{pmatrix}    
	\hat{\textbf{Y}}\ =
	\begin{pmatrix}
	\hat{q}_{1}\\
	\hat{q}_{2}\\
	\vdots\\
	\hat{q}_{N}\\
	\end{pmatrix}
	\textbf{Y} =
	\begin{pmatrix}
	q_{1}\\
	q_{2}\\
	\vdots\\
	q_{N}\\
	\end{pmatrix}
	\end{align*}
	If WTP is exponentially distributed, the entries in the vectors of quantity of demand ($\hat{\textbf{Y}}$ and $\textbf{Y}$) are in logarithmic form ($\ln(\hat{q}_{i})$ and $\ln(q_{i})$). If instead WTP is Pareto distributed, the entries in $\textbf{X}$ are also in logarithmic form ($\ln(q_{i}), \ln(z_{i})$), in addition to $\hat{\textbf{Y}}$ and $\textbf{Y}$. For unknown distribution of WTP with second-order (or higher-order) of Taylor approximation, the entries in $\textbf{X}$ shall also include the corresponding higher order polynomials of $p$ and $z$ and their interaction terms. 
	
	Besides, when the Taylor expansion is used to approximate the unknown distribution of WTP at $(p_{0},z_{0})$, the estimation sample shall be restricted to a small neighborhood of $(p_{0},z_{0})$, namely $N_{\delta_{n}}(p_{0},z_{0})$ with $\underset{n\rightarrow\infty}{\lim} \delta_{n}=0 $. The number of columns of $\textbf{X}$ ($k$) also depends on the survey design. For example, if only income and competitor's price are included in the survey, that is all the other factors are implicitly controlled, the number of columns of $z$ in $\textbf{X}$ is two. 
	
	The OLS estimator is $\hat{\beta}=(\textbf{X}'\textbf{X})^{-1}\textbf{X}'\hat{\textbf{Y}}$. Since $\underset{n\rightarrow\infty}{\lim}(\hat{\textbf{Y}}-\textbf{Y})=\textbf{0}$, by construction, $\textbf{X}$ is asymptotically orthogonal to the error term ($\hat{\textbf{Y}} - \textbf{Y}$). Hence, the OLS estimator is consistent. In case of small sample, the asymptotic orthogonality is not guaranteed. Using excluded instruments ($\tilde{\textbf{Z}}$), one can use the generalized method of moments (GMM) estimator to consistently estimate $\beta$, namely $\hat{\beta}_{GMM}=(\textbf{X}' \tilde{\textbf{Z}} \tilde{\textbf{Z}}' \textbf{X})^{-1} \textbf{X}' \tilde{\textbf{Z}} \tilde{\textbf{Z}}' \hat{\textbf{Y}}$.  
	
	\begin{theorem}
		$\hat{\beta}$ is a consistent estimator of $\beta$. 
	\end{theorem}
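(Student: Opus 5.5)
The plan is to write the OLS estimator as the true parameter plus an error term, and show that the error term vanishes in probability. The demand model is exact in the parametric cases: the linear model for uniform WTP, and the log forms for exponential and Pareto WTP. So the true quantity vector satisfies $\textbf{Y}=\textbf{X}\beta$, possibly up to the Taylor remainder $\textbf{R}$ in the nonparametric case. Substituting into $\hat{\beta}=(\textbf{X}'\textbf{X})^{-1}\textbf{X}'\hat{\textbf{Y}}$ gives
\begin{equation*}
\hat{\beta}-\beta=\Big(\tfrac{1}{N}\textbf{X}'\textbf{X}\Big)^{-1}\tfrac{1}{N}\textbf{X}'(\hat{\textbf{Y}}-\textbf{Y})+\Big(\tfrac{1}{N}\textbf{X}'\textbf{X}\Big)^{-1}\tfrac{1}{N}\textbf{X}'\textbf{R}.
\end{equation*}
It then suffices to show two things: $\frac{1}{N}\textbf{X}'\textbf{X}$ converges to a nonsingular matrix, and both cross terms go to zero.

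\emph{Step 1: the design matrix.} The rows of $\textbf{X}$ are the distinct WTP values $\bar{w}_j$ and the corresponding $z$'s. Since $\lim N/n>0$, $N\to\infty$. I will assume $W$ and $Z$ have finite second moments and a nondegenerate joint distribution, so that no column of $\textbf{X}$ is collinear. A law of large numbers then gives $\frac{1}{N}\textbf{X}'\textbf{X}\to\Sigma$ with $\Sigma$ positive definite. In the log specifications, the same argument applies to $\ln p$ and $\ln z$.

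\emph{Step 2: the estimation error in the dependent variable.} This is the main obstacle. By Glivenko--Cantelli, $\sup_w|\hat{F}_{W|Z}(w)-F_{W|Z}(w)|\to0$ almost surely, and by consistency of the density or frequency estimator in Step 3, $\hat{f}_Z(z)\to f_Z(z)$. Hence $\max_i|\hat{q}_i-q_i|\to0$, uniformly over the evaluation points because the Glivenko--Cantelli convergence is uniform. Then by Cauchy--Schwarz,
\begin{equation*}
\Big\|\tfrac{1}{N}\textbf{X}'(\hat{\textbf{Y}}-\textbf{Y})\Big\|\le\Big(\tfrac{1}{N}\textstyle\sum_i\|x_i\|^2\Big)^{1/2}\max_i|\hat{q}_i-q_i|\to0,
\end{equation*}
since the first factor is bounded by Step 1.

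The delicate points in this step are the following:
\begin{itemize}
\item For $\hat{f}_Z$ I need uniform convergence over the $z_i$. For continuous $Z$ I will invoke standard kernel uniform-consistency results with a bandwidth $h_n\to0$ and $nh_n\to\infty$; for discrete $Z$, the strong law suffices.
\item In the log specifications, $\ln\hat{q}_i$ is problematic where $q_i$ is near zero, that is, at the largest WTP values. There I will trim to points with $q_i\ge c>0$, so that the logarithm is uniformly continuous on the relevant range.
\end{itemize}

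\emph{Step 3: the Taylor remainder.} In the approximation case the sample is restricted to $N_{\delta_n}(p_0,z_0)$. I will show that $\frac{1}{N}\textbf{X}'\textbf{R}=O(\delta_n^{2})\to0$ for the first-order expansion, using smoothness of $F$ and $f$. I will also need $N\delta_n^{d}\to\infty$, which keeps enough points in the neighborhood. Here $\beta$ is understood as the local derivative coefficients, so the design must be recentred at $(p_0,z_0)$ and rescaled by $\delta_n$ to keep $\Sigma$ nonsingular; I expect this rescaling to be the fiddliest bookkeeping. Combining Steps 1--3 with Slutsky's theorem gives $\hat{\beta}\to\beta$ in probability.
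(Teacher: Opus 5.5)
Your Steps 1--2 follow the paper's argument and make it precise. The paper writes $\plim(\textbf{X}'\textbf{X})^{-1}\textbf{X}'\hat{\textbf{Y}}=\plim(\textbf{X}'\textbf{X})^{-1}\textbf{X}'\textbf{Y}=\beta$, citing Glivenko--Cantelli for the first equality and OLS consistency for the second. You supply what the first equality silently needs: the $1/N$ normalisation, a nonsingular limit $\Sigma$, and a uniform bound combined with Cauchy--Schwarz. You also note that in the exact uniform, exponential and Pareto specifications $\textbf{Y}=\textbf{X}\beta$, so the second equality is an identity.

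For those cases your argument is sound, with two caveats:
\begin{itemize}
\item With trimming you prove consistency of a trimmed variant of $\hat{\beta}$. Without trimming, $\max_i|\ln\hat q_i-\ln q_i|$ stays $O_p(1)$ at the top order statistics, so the max-norm route fails for the paper's untrimmed estimator.
\item For continuous $Z$, plain Glivenko--Cantelli does not give $\sup_w|\hat F_{W|Z}-F_{W|Z}|\to0$. You would need a smoothed conditional-CDF estimator and a uniform-in-$(w,z)$ consistency result. The paper glosses this point as well.
\end{itemize}

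The genuine gap is in Step 3. Once you recentre and rescale the regressors by $\delta_n$, the coefficients your Steps 1--2 control are $D_n\beta$ with $D_n=\mathrm{diag}(1,\delta_n,\ldots,\delta_n)$. The slopes in original units are therefore the rescaled estimates divided by $\delta_n$.

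Your remainder bound survives this, since $O(\delta_n^2)/\delta_n=O(\delta_n)$. The sampling-error term does not. Step 2 delivers $\frac{1}{N}\textbf{X}'(\hat{\textbf{Y}}-\textbf{Y})=o_p(1)$, but you now need $o_p(\delta_n)$. Through your max-norm bound this means $\max_i|\hat q_i-q_i|=o_p(\delta_n)$, which has two consequences:
\begin{itemize}
\item Since $\hat F(p_0)-F(p_0)$ is of exact order $n^{-1/2}$, you need $\sqrt{n}\,\delta_n\to\infty$.
\item With a kernel $\hat f_Z$ you need $\sup|\hat f_Z-f_Z|=o_p(\delta_n)$ over the neighbourhood, which ties the bandwidth $h_n$ to $\delta_n$.
\end{itemize}
Your condition $N\delta_n^{d}\to\infty$ only keeps the neighbourhood populated and implies neither.

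A sharper route exists. A constant error vector is absorbed exactly by the intercept. Increments of $\hat F-F$ over a $\delta_n$-window around a fixed $p_0$ are $O_p(\sqrt{\delta_n/n})$. In the price-only case this gives a slope error of $O_p((n\delta_n)^{-1/2})$, which does justify $n\delta_n\to\infty$. But that is a local modulus-of-continuity argument for the empirical process, not the Cauchy--Schwarz step you propose.

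So what is missing here is a rate condition or a local argument, not bookkeeping. The paper's one-line proof does not treat the Taylor case separately at all.
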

	\textit{Proof}: 
	$
	\underset{n\rightarrow\infty}{\plim}\hat{\beta} = \underset{n\rightarrow\infty}{\plim}(\textbf{X}'\textbf{X})^{-1}\textbf{X}'\hat{\textbf{Y}}=
	\underset{n\rightarrow\infty}{\plim}(\textbf{X}'\textbf{X})^{-1}\textbf{X}'\textbf{Y}=\beta$
	, where the second equality is due to the Glivenko-Cantelli Theorem and the third equality is due to the consistency of the OLS estimator. 
	
	$\eolqed$
	
	\subsection{Miss-reporting of WTP}
	In Step 1 of Section 3.1, one can use different methods to obtain data of WTP and $Z$, which not surprisingly perform differently in revealing consumers' WTP truthfully. In the Vickrey auction, due to truthful bidding being the weakly dominant strategy, the biddings reveal participants' WTP truthfully. In the CVM, another popular method, one uses survey to solicit consumers' WTP. In the survey, a number of biases can exist in soliciting consumers' WTP. First, there can be a strategic bias. That is, the interviewees think they can obtain the benefit without telling the truth, or they are re-sell the goods. In such cases, they will not reveal the true WTP. Second, there can be design biases. The starting point can exert an anchor point effect as the interviewees take a hint of the value of the goods from the first dollar value that they are asked. To avoid this anchor point effect, researchers can randomize the first dollar value that they ask. Information is also important for the survey. Consumers need to be provided with sufficient information in order for them to report WTP truly. As such, information can distort the reporting of WTP, particularly if consumers are not familiar with the goods. Lastly, it shall be noted that the survey is conducted in a hypothetical situation, and consumers' response may be different from their actual purchase decisions. To address this issue, researchers can cross-check the survey data with consumers' actual purchase, if such data are available.
	
	Despite consumers may not report their true WTP, it turns out that the estimation procedure outlined in Section 3.1 is robust to small mis-reporting of WTP. More specifically, let $\widetilde{W}$ be the reported WTP and $\Delta$ represent the reporting error, namely $\widetilde{W}=W+\Delta$. We can derive the relationship between $F_{W|Z}(p)$ and $F_{\widetilde{W}|Z}(p)$, as follows:
	\begin{align*}
	F_{\widetilde{W}|Z}(p) &=\Pr (W\leq p-\Delta|Z) \\
	&=\Pr (\Delta \geq 0)\times \Pr (W\leq p -\Delta|Z, \Delta \geq 0) +\\
	&\Pr (\Delta < 0)\times \Pr (W\leq p -\Delta|Z, \Delta < 0) \\
	&=F_{W|Z}(p)+ \xi
	\end{align*}  
	where the second equality is due to the law of total probability, and $\xi \equiv [- \Pr (\Delta \geq 0)\times \Pr(p-\Delta <W\leq p|Z,\Delta \geq 0) + \Pr (\Delta < 0)\times Pr(p<W\leq p-\Delta|Z,\Delta < 0)]$. Note if $\xi$ is a constant or small, then the estimation procedure will work well. 
	
	\begin{theorem}
		If $W|Z,\Delta$ is uniformly distributed, the estimation procedure is robust to arbitrary mis-reporting of WTP in identifying the coefficient of price.
	\end{theorem}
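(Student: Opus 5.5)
The plan is to compute the error term $\xi$ from the decomposition just derived, under the assumption that $W|Z,\Delta$ is uniform. The goal is to show that $\xi$ does not depend on $p$. It may depend on $Z$ and on the distribution of $\Delta$, and it then enters the demand only through the intercept (and possibly the $z$ terms).

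Suppose, conditional on $Z=z$ and $\Delta$, that $W\sim U[a,b]$ with density $1/(b-a)$. Take a price $p$ such that both $p$ and $p-\Delta$ lie in the interior of the support.

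\begin{itemize}
\item For $\Delta\geq 0$: $\Pr(p-\Delta<W\leq p\mid Z,\Delta)=\Delta/(b-a)$.
\item For $\Delta<0$: $\Pr(p<W\leq p-\Delta\mid Z,\Delta)=-\Delta/(b-a)$.
\end{itemize}

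Substituting into the definition of $\xi$, and integrating over $\Delta$ within each event, gives
\begin{equation*}
\xi=-\frac{\Pr(\Delta\geq 0)E[\Delta\mid\Delta\geq 0]+\Pr(\Delta<0)E[\Delta\mid\Delta<0]}{b-a}=-\frac{E[\Delta\mid Z]}{b-a}.
\end{equation*}
This is free of $p$. So $F_{\widetilde W|Z}(p)=F_{W|Z}(p)+\xi(z)$, equivalently $F_{\widetilde W|Z}(p)=F_{W|Z}(p+E[\Delta|Z])$: the reported-WTP distribution is simply a shift of the true one. If the support bounds $a,b$ depend on $\Delta$, I would condition on $\Delta$ and average. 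The uniform density is constant, so the $p$-slope stays $-1/(b-a)$ in each piece, provided $b-a$ is common across $\Delta$.

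Next, plug this into Equation (1) or (2):
\begin{equation*}
\tilde q(p,z)=[1-F_{W|Z}(p)-\xi(z)]f_Z(z)L=q(p,z)-\xi(z)f_Z(z)L.
\end{equation*}
In the uniform case $q$ is linear in $p$, with slope $\beta_1=-f_Z(z)L/(b-a)$. The term $-\xi(z)f_Z(z)L$ does not involve $p$, so $\partial\tilde q/\partial p=\partial q/\partial p$. Only the intercept (or the $z$-coefficients) absorbs the misreporting. I then apply the consistency proposition above to the data $\{(\tilde w_i,z_i)\}$. By Glivenko--Cantelli, the empirical CDF of $\widetilde W$ converges to $F_{\widetilde W|Z}$, so OLS consistently estimates the coefficients of $\tilde q$. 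Its price coefficient coincides with $\beta_1$, and this holds for any distribution of $\Delta$, which is the sense of ``arbitrary''.

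The main obstacle is the support boundary. When $p$ or $p-\Delta$ falls outside $[a,b]$, the probabilities above are truncated and $\xi$ again depends on $p$. I would first restrict the estimation sample to prices in the interior region, namely $p\in[a+\max\Delta,\,b+\min\Delta]$. Alternatively, I would interpret the uniformity assumption as holding on a support wide enough that this region covers all relevant prices, and state the result for that range.
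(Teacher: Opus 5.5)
Your proposal is correct and follows the paper's own argument: uniformity of $W\mid Z,\Delta$ makes $\xi$ free of $p$, so misreporting only shifts the intercept (and possibly the $z$-terms) by $-\xi f_Z(z)L$. Your explicit formula $\xi=-E[\Delta\mid Z]/(b-a)$ and your restriction to the interior of the support make precise what the paper's proof leaves implicit, since it ignores the boundary, where $\xi$ does depend on $p$; one small slip is that your range $[a+\max\Delta,\,b+\min\Delta]$ only guarantees $p-\Delta\in[a,b]$, so it should also be intersected with $[a,b]$ to cover the case where the support of $\Delta$ lies strictly on one side of zero.
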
 
	\textit{Proof}: 
	Since $W|Z,\Delta$ is uniformly distributed, $\Pr(p-\Delta <W\leq p|Z,\Delta \geq 0)$ and $Pr(p<W\leq p-\Delta|Z,\Delta < 0)$ do not depend on $p$. Therefore, $\xi$ does not depend on $p$. Then conditional on $Z$:
	\begin{align*}
	q=[1-F_{W|Z}(p)]L=[1-F_{\widetilde{W}|Z}(p)]L-\xi L 
	\end{align*}  
	where $\xi L$ is a constant. Hence, the estimation procedure is robust to arbitrary mis-reporting of WTP.
	
	$\eolqed$
	
	Proposition 2 suggests that if the underlying true WTP is uniformly distributed, the reporting errors do not affect the estimation of the coefficient of price. Note as $\xi L$ can depend on $Z$, the estimation of coefficients of $Z$ is likely to be affected. In the following, Proposition 3 shows that if the reporting error is small and the PDF of $W$ (true WTP) is flat, the estimation procedure is robust.
	
	\begin{theorem}
		Conditional on $Z$, let $\Delta$ be independent of $W$ and randomly distributed over the support $[-\overline{\delta},\overline{\delta}]$ with $\overline{\delta}$ small and $\Pr (\Delta \geq 0) = \Pr (\Delta < 0)$, and $W|Z$ have a PDF that is flat. Then, the estimation procedure is robust in identifying the coefficient of price.
	\end{theorem}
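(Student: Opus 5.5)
The plan is to reuse the decomposition $F_{\widetilde{W}|Z}(p)=F_{W|Z}(p)+\xi$ derived just before Proposition 2. I will show that, under the stated assumptions, $\xi$ is either identically zero or a negligible term that does not vary with $p$. Robustness for the price coefficient then follows exactly as in the proof of Proposition 2. Since $\Delta$ is independent of $W$ given $Z$, the conditioning on $\Delta$ inside the two probabilities in $\xi$ can be dropped. We can then write
\begin{align*}
\xi = -\Pr(\Delta\geq 0)\,E\!\left[\int_{p-\Delta}^{p} f_{W|Z}(w)\,dw \,\Big|\, \Delta\geq 0\right] + \Pr(\Delta<0)\,E\!\left[\int_{p}^{p-\Delta} f_{W|Z}(w)\,dw \,\Big|\, \Delta<0\right].
\end{align*}

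The key step is to expand each inner integral to first order around $p$. Because $\overline{\delta}$ is small and $|\Delta|\leq\overline{\delta}$, we have $\int_{p-\Delta}^{p} f_{W|Z}(w)\,dw = f_{W|Z}(p)\Delta + O(\overline{\delta}^2)$ for $\Delta\geq 0$, and $\int_{p}^{p-\Delta} f_{W|Z}(w)\,dw = -f_{W|Z}(p)\Delta + O(\overline{\delta}^2)$ for $\Delta<0$. Substituting these into the expression above gives
\begin{align*}
\xi = -f_{W|Z}(p)\left\{\Pr(\Delta\geq 0)E[\Delta\mid\Delta\geq0] + \Pr(\Delta<0)E[\Delta\mid\Delta<0]\right\} + O(\overline{\delta}^2) = -f_{W|Z}(p)E[\Delta] + O(\overline{\delta}^2).
\end{align*}
The assumption $\Pr(\Delta\geq0)=\Pr(\Delta<0)$ is what should make the bracketed term vanish. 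Strictly speaking, equal probability mass on each side gives $E[\Delta]=0$ only if the conditional magnitudes also balance, for example if $\Delta$ is symmetric.

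I expect this to be the main obstacle, and I would handle it in two ways. The first is to use the flatness of the PDF. If $f_{W|Z}$ is constant (equal to $c$) on the relevant range $[p-\overline{\delta},p+\overline{\delta}]$, then $\xi=-cE[\Delta|Z]$ exactly, with no remainder term. This term does not depend on $p$, so it is absorbed into the intercept, i.e. into the $Z$-terms, exactly as $\xi L$ was in Proposition 2. The second is to use smallness of the error. If $f_{W|Z}$ is merely approximately flat, then $\xi$ is of order $\overline{\delta}$ and uniformly small in $p$.

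Either way, conditional on $Z$, we get $q=[1-F_{\widetilde{W}|Z}(p)]L-\xi L$, where $\xi L$ is constant in $p$ or vanishes as $\overline{\delta}\to0$. Running Steps 2--5 on the reported WTP then yields an OLS slope on $p$ that coincides with, or converges to, the true coefficient, by the consistency argument of Proposition 1.
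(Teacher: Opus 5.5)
Your argument is correct, but it handles the first-order term differently from the paper. The paper uses $\Pr(\Delta\geq0)=\Pr(\Delta<0)=\tfrac12$ to write $\xi=\tfrac12\bigl[\int_p^{p+|\Delta|}f_{W|Z}(w)\,dw-\int_{p-|\Delta|}^{p}f_{W|Z}(w)\,dw\bigr]$. It then shifts the second integral onto $[p,p+|\Delta|]$ and applies the mean value theorem to $f_{W|Z}(w)-f_{W|Z}(w-|\Delta|)$. This gives $\xi\leq\tfrac12\sup f'_{W|Z}\,\overline{\delta}^2$: the pairing cancels the first-order term, and flatness controls the second-order remainder. That pairing silently uses the same magnitude $|\Delta|$ on both sides, i.e.\ symmetry of $\Delta$. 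This is exactly the gap you flag.

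Your expansion shows that the paper's bound in fact holds under the weaker condition $E[\Delta\mid Z]=0$, since your remainder is at most $\tfrac12\sup|f'_{W|Z}|\,\Delta^2$. Your route has two advantages. It is valid under the hypotheses as literally stated; the equal-probability condition actually does no work in your argument. And it gives an explicit leading term $-f_{W|Z}(p)E[\Delta\mid Z]$, which depends on $p$ only through $f_{W|Z}(p)$, so flatness by itself neutralizes it for the price coefficient.

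Two refinements are needed.
\begin{itemize}
\item In your first case, the constant $c$ must be the same at every price in the sample. That means $f_{W|Z}$ is constant on the whole price range, essentially the uniform case of Proposition 2; otherwise $-c(p)E[\Delta\mid Z]$ moves with $p$.
\item Your second case, $|\xi|\leq\sup f_{W|Z}\cdot\overline{\delta}$, uses only boundedness of the density, not flatness.
\end{itemize}
Combining your expansion with the remainder bound gives a sharper statement: for sample prices $p,p'$, $|\xi(p)-\xi(p')|\leq\sup|f'_{W|Z}|\bigl(|p-p'|\,\overline{\delta}+\overline{\delta}^2\bigr)$. So the $p$-varying part of $\xi$, which is the only part that can bias the slope in the linear specification, is small exactly when the PDF is flat and $\overline{\delta}$ is small, with no symmetry required.
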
 
	\textit{Proof}: Using $\Pr (\Delta \geq 0) = \Pr (\Delta < 0)$, the difference between the CDFs of true and reported WTP can be written as follows:
	\begin{align*}
	\xi &=\frac{1}{2} \times [\int_{p}^{p+|\Delta|}f_{W|Z}(w)dw - \int_{p-|\Delta|}^{p}f_{W|Z}(w)dw] \\
	&= \frac{1}{2} \times \int_{p}^{p+|\Delta|}[f_{W|Z}(w)-f_{W|Z}(w-|\Delta|)]dw \\
	&= \frac{1}{2} \times \int_{p}^{p+|\Delta|}f'_{W|Z}(\zeta)|\Delta|dw \\
	&\leq \frac{1}{2} \times \sup f'_{W|Z} \times \overline{\delta}^2
	\end{align*} 
	where the second equality is obtained by change of variables, and the third equality is due to the mean value theorem, where $\zeta \in (w-|\Delta|,w)$. Therefore, given that the PDF, $f_{W|Z}$, is flat and $\delta$ is small, $\xi$ is bounded by a term that is close to zero. Hence, the estimation procedure is robust to small mis-reporting of WTP.
	
	$\eolqed$
	
	In the proof of Proposition 3, we can observe that whether the mis-reporting of WTP will affect estimation of the demand function critically depends on the shape (flatness) of the PDF of true WTP. If the PDF is relatively flat, such as the uniform PDF, the impact on identification is negligible. With reporting errors, the local OLS estimator (OLS estimation of a Taylor approximated demand function at a small neighborhood of the expansion point) is more desirable in that in the small neighborhood, the PDF of true WTP is more or less flat.  
	
	\section{Monte Carlo Simulations}
	In this section, I report the Monte Carlo simulation results, which illustrate the effectiveness of the estimation procedure proposed in Section 3.1. I look at four distributions of the true WTP, namely the uniform, exponential, Pareto, and log-normal distributions. For the uniform distribution, a linear demand function is estimated, while for the exponential and Pareto distributions, semi-logarithmic and logarithmic linear demand functions are estimated respectively. For the log-normal distribution, I estimate a logarithmic linear demand function, which is a first-order Taylor approximation of the underlying true demand function. In addition, I also estimate the demand functions when the reported WTP contains small errors. In general, the simulation results suggest that the estimation procedure works well in identifying the coefficient of price. 
	
	The data generating processes are as follows\footnote{I assume WTP does not depend on $Z$.}: (1) the number of consumers in the market is set to 1000; (2) 200 observations of WTP are drawn from the distributions of $W \sim U[0,50]$, $W\sim Exp(1)$, $W\sim Pareto(1,2)$, and $W\sim Lognormal(0,1)$ respectively, and 200 small reporting errors are drawn from a uniform distribution, $U[-0.5,0.5]$; Note that it implies the following associated true demand functions: 
	\begin{align*}
	q&=1000-20p \\
	\log(q)&=\log(1000)-p \\
	\log(q)&=\log(1000)-2p \\
	q&=[\frac{1}{2} - \frac{1}{2}erf[\frac{\log(p)}{\sqrt{2}}]]\times 1000
	\end{align*} 
	(3) the empirical CDF, $F_{\widetilde{W}|Z}(w)$, and the quantity of demand are estimated accordingly, and to account for possible mis-reporting, the error draws are added to the WTP draws before the estimation of empirical CDF and quantity of demand; (4)the constructed data are then used in regressions to estimate the coefficients of price.  
	
	Table 1 reports the estimation results with the simulated data (without reporting errors)\footnote{Tables in this paper are created in R using stargazer v.5.2.2 by Marek Hlavac.}, and Figure 1 plots the corresponding scatter plots and fitted lines. It can be observed that the estimation procedure largely works well. For the uniform distribution, the estimated coefficient of price is -21.234, while the true coefficient is -20. The estimation error of 6.17 percent is small. For the exponential distribution, the price coefficient is estimated to be -1.097, which is close to the true value of -1. The coefficient of price for the Pareto distribution is estimated to be -1.781, with the true value being -2. The estimation error is 10.95 percent. As for the log-normal distribution, since I use a first-order Taylor approximation of the underlying demand function, it is not surprising to observe that the estimation is less accurate (see the right bottom panel of Figure 1). To estimate the approximated demand function more accurately, one needs to restrict the sample to a small neighborhood of the expansion point. Here we estimate the approximated demand function using all the available data.  
	
	\begin{table}[!htbp] \centering 
		\caption{Simulation Results}  
		\label{} 
		\small 
		\begin{tabular}{@{\extracolsep{-15pt}}lD{.}{.}{-3} D{.}{.}{-3} D{.}{.}{-3} D{.}{.}{-3} } 
			\\[-1.8ex]\hline 
			\hline \\[-1.8ex] 
			& \multicolumn{4}{c}{\textit{Dependent variable:}} \\ 
			\cline{2-5} 
			\\[-1.8ex] & \multicolumn{1}{c}{$q$} & \multicolumn{3}{c}{$\log(q)$} \\ 
			\\[-1.8ex] & \multicolumn{1}{c}{Uniform} & \multicolumn{1}{c}{Exponential} & \multicolumn{1}{c}{Pareto} & \multicolumn{1}{c}{Log-normal}\\ 
			\hline \\[-1.8ex] 
			$price$ & -21.234^{***} & -1.097^{***} &  &  \\ 
			& (0.114) & (0.009) &  &  \\ 
			& & & & \\ 
			$\log(price)$ &  &  & -1.781^{***} & -0.885^{***} \\ 
			&  &  & (0.013) & (0.027) \\ 
			& & & & \\ 
			Constant & 1,010.158^{***} & 7.026^{***} & 6.841^{***} & 5.915^{***} \\ 
			& (3.150) & (0.013) & (0.010) & (0.027) \\ 
			& & & & \\ 
			\hline \\[-1.8ex] 
			Observations & \multicolumn{1}{c}{200} & \multicolumn{1}{c}{199} & \multicolumn{1}{c}{199} & \multicolumn{1}{c}{199} \\ 
			R$^{2}$ & \multicolumn{1}{c}{0.994} & \multicolumn{1}{c}{0.986} & \multicolumn{1}{c}{0.990} & \multicolumn{1}{c}{0.842} \\ 
			Adjusted R$^{2}$ & \multicolumn{1}{c}{0.994} & \multicolumn{1}{c}{0.985} & \multicolumn{1}{c}{0.990} & \multicolumn{1}{c}{0.842} \\ 
			\hline 
			\hline \\[-1.8ex] 
			\textit{Note: }  & \multicolumn{4}{r}{$^{*}$p$<$0.1; $^{**}$p$<$0.05; $^{***}$p$<$0.01. } \\ 
		\end{tabular} 
	\end{table}
	
	\begin{figure}
		\caption{Simulation Results}
		\label{figure1}
		\includegraphics[width=0.8\textwidth]{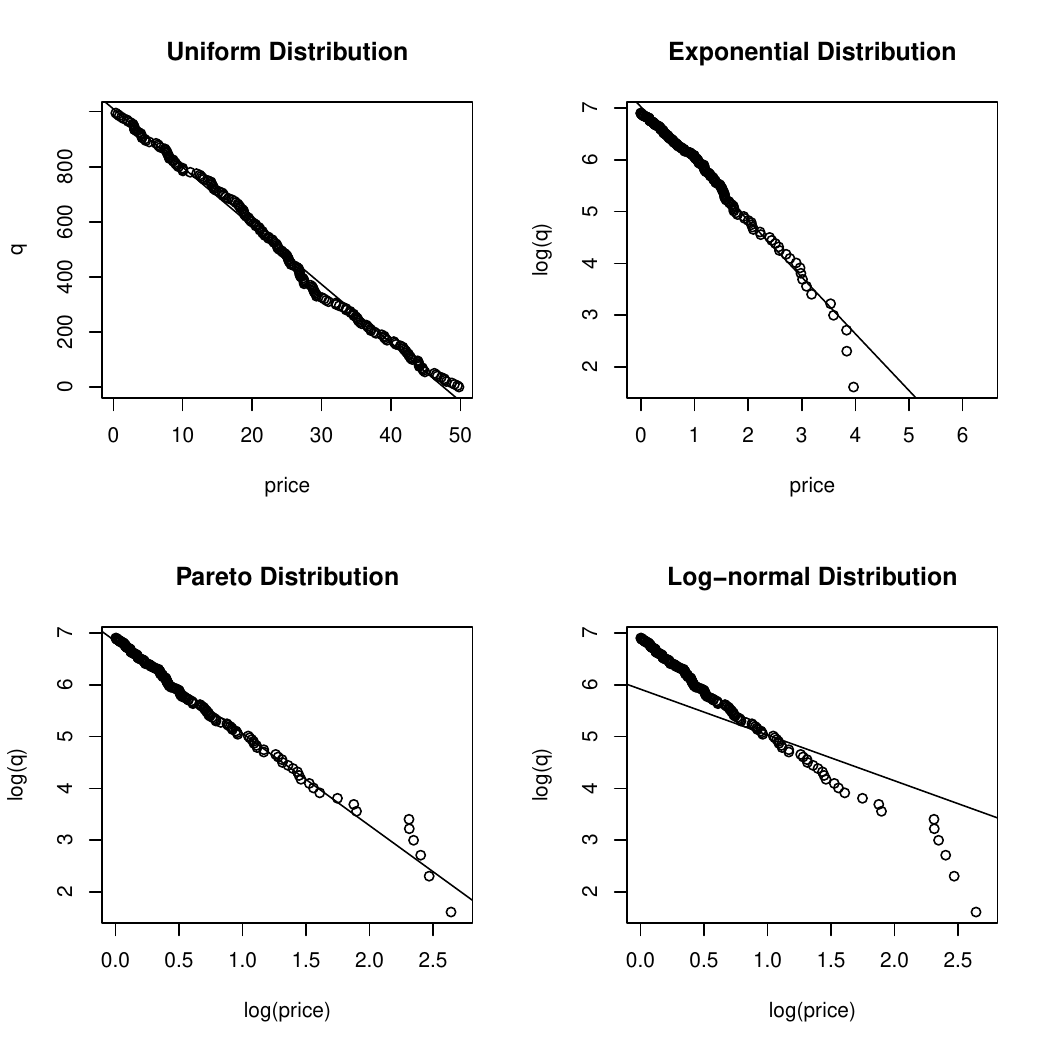}
		\centering
	\end{figure}
	
	Table 2 presents the simulation results when the reported WTP contains small errors. For the uniform distribution, the coefficient of price is estimated to be -21.098, which is close to the true value of -20 (with 5.49 percent estimation error). In addition, compared with the estimate of Table 1, the point estimates appear not to be affected by the reporting errors, which is consistent with Proposition 1. When WTP is exponentially distributed (with small uniform distributed errors), the estimation also identifies the coefficient of price accurately, with the point estimate being -0.929 (true value being -1).
	
	In contrast, the point estimates, with Pareto distributed WTP and small uniformly distributed reporting errors, are now less accurate, with the estimated coefficient of price being -1.463. The estimation error is 26.85 percent, higher than that of Table 1 (10.95 percent estimation error). Hence, it appears that the Pareto distributed WTP is less robust to mis-reporting. Similarly, for the log-normal distribution of WTP, the point estimates also exhibit variations, compared with those without reporting errors. As in the proof of Proposition 3, the impact of reporting errors critically depends on the flatness of the PDF of true WTP. Both Pareto and log-normal distributions are less flat than the uniform and exponential distributions.  
	
	\begin{table}[!htbp] \centering 
		\caption{Simulation Results with Small Reporting Errors} 
		\label{} 
		\small 
		\begin{tabular}{@{\extracolsep{-15pt}}lD{.}{.}{-3} D{.}{.}{-3} D{.}{.}{-3} D{.}{.}{-3} } 
			\\[-1.8ex]\hline 
			\hline \\[-1.8ex] 
			& \multicolumn{4}{c}{\textit{Dependent variable:}} \\ 
			\cline{2-5} 
			\\[-1.8ex] & \multicolumn{1}{c}{$q$} & \multicolumn{3}{c}{$\log(q)$} \\ 
			\\[-1.8ex] & \multicolumn{1}{c}{Uniform} & \multicolumn{1}{c}{Exponential} & \multicolumn{1}{c}{Pareto} & \multicolumn{1}{c}{Log-normal}\\  
			\hline \\[-1.8ex] 
			$price$ & -21.098^{***} & -0.929^{***} &  &  \\ 
			& (0.123) & (0.026) &  &  \\ 
			& & & & \\ 
			$\log(price)$ &  &  & -1.463^{***} & -0.663^{***} \\ 
			&  &  & (0.044) & (0.037) \\ 
			& & & & \\ 
			Constant & 1,007.719^{***} & 6.885^{***} & 6.632^{***} & 5.838^{***} \\ 
			& (3.393) & (0.039) & (0.034) & (0.043) \\ 
			& & & & \\ 
			\hline \\[-1.8ex] 
			Observations & \multicolumn{1}{c}{200} & \multicolumn{1}{c}{174} & \multicolumn{1}{c}{200} & \multicolumn{1}{c}{189} \\ 
			R$^{2}$ & \multicolumn{1}{c}{0.993} & \multicolumn{1}{c}{0.882} & \multicolumn{1}{c}{0.850} & \multicolumn{1}{c}{0.631} \\ 
			Adjusted R$^{2}$ & \multicolumn{1}{c}{0.993} & \multicolumn{1}{c}{0.882} & \multicolumn{1}{c}{0.850} & \multicolumn{1}{c}{0.629} \\ 
			\hline 
			\hline \\[-1.8ex] 
			\textit{Note:}  & \multicolumn{4}{r}{$^{*}$p$<$0.1; $^{**}$p$<$0.05; $^{***}$p$<$0.01} \\ 
		\end{tabular} 
	\end{table} 
	
	\begin{figure}
		\caption{Simulation Results with Small Reporting Errors}
		\includegraphics[width=0.8\textwidth]{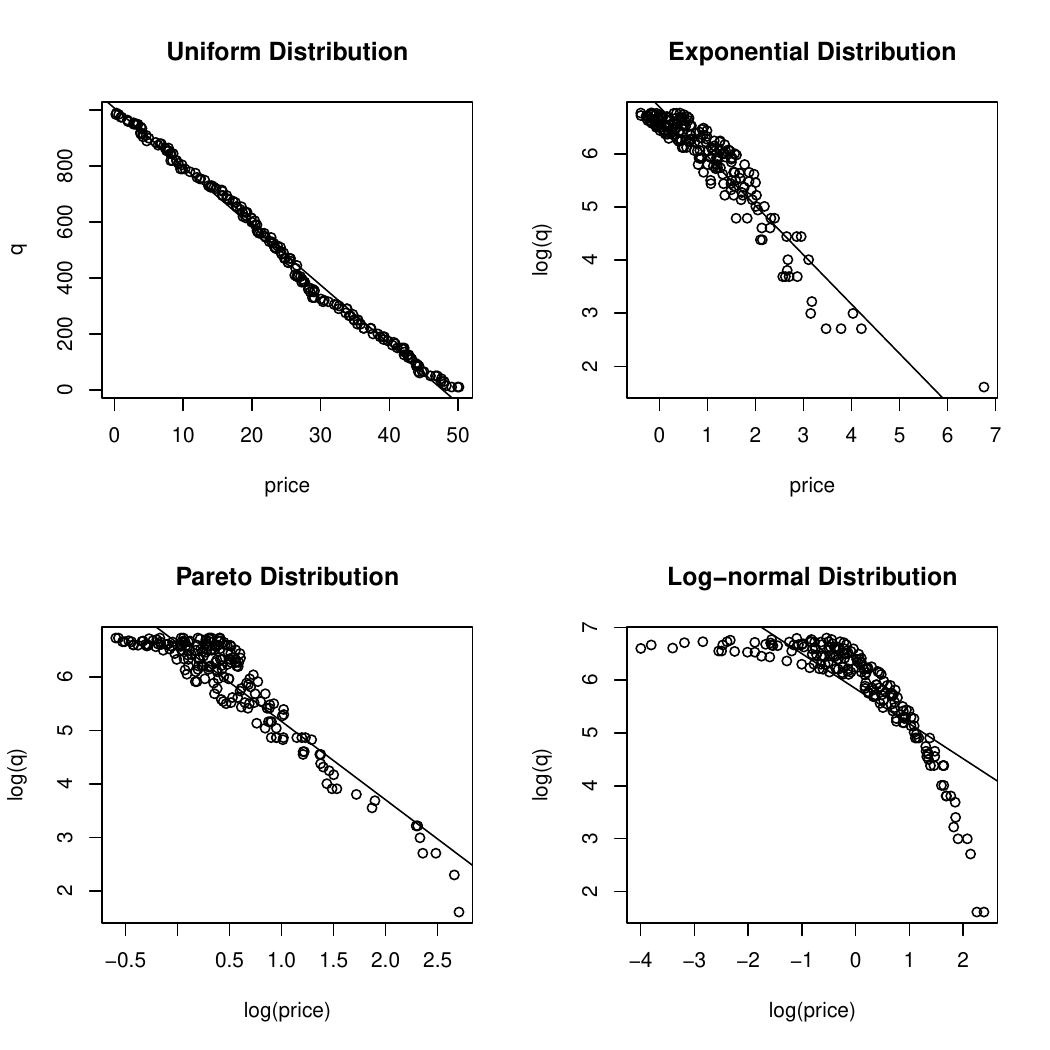}
		\centering
	\end{figure}
	
	\section{Demand for Coffee}
	Coffee is a goods with which consumers are familiar, and as such serves as a good example for illustrating estimation of demand functions, despite it is not a new product. Utilizing the estimation procedure in Section 3.1 and CVM, I estimate a demand for coffee (medium-sized long black) by the staff of James Cook University\footnote{James Cook University Ethics Approval Number H7972.}.
	
	The survey questionnaire is designed to solicit the WTP of coffee consumers at James Cook University. First, the survey starts with asking whether the interviewees are willing to pay $\$ x$  for a cup of medium-sized long black, where $x \in \{4, 8, 16, 32\}$ is randomly presented to interviewees to control for the possible anchor point effect. If an interviewee answers yes, she/he is then asked whether she/he is willing to pay $\$2x$. If again the interview answers yes, she/he is asked to report a WTP in the range of $[\$2x, \$50]$. If the interviewee is not willing to pay $\$2x$, she/he is asked to report a WTP in the range of $[\$x, \$2x)$. If in the first question, the interviewee answers no, namely not willing to pay $\$ x$ for the coffee, she/he is asked whether she/he is willing to pay $\$\frac{1}{2} x$. If answered yes, she/he is directed to report a WTP in the rage of $[\$\frac{1}{2}x, \$x)$, and if no the range of WTP is $[\$0, \$\frac{1}{2}x)$. Figure 3 presents the flow chart. The survey was implemented using Qualtrics. The survey questionnaire is attached in the appendix.
	
	\begin{figure}
		\caption{The Flow Chart of Soliciting WTP}
		\tikzstyle{level 1}=[level distance=3.5cm, sibling distance=3.5cm]
		\tikzstyle{level 2}=[level distance=3.5cm, sibling distance=2cm]
		
		\tikzstyle{bag} = [text width=4em, text centered]
		\tikzstyle{end} = [circle, minimum width=3pt,fill, inner sep=0pt]
		
		\begin{tikzpicture}[grow=right, sloped]
		\node[bag] {Pay $\$x$?}
		child {
			node[bag] {Pay $\$\frac{1}{2}x$?}        
			child {
				node[end, label=right:
				{$WTP\in [\$0,\$\frac{1}{2}x)$}] {}
				edge from parent
				node[above] {}
				node[below] {No}
			}
			child {
				node[end, label=right:
				{$WTP\in [\$\frac{1}{2}x,\$x)$}] {}
				edge from parent
				node[above] {Yes}
				node[below]  {}
			}
			edge from parent 
			node[above] {}
			node[below]  {No}
		}
		child {
			node[bag] {Pay $\$2x$?}        
			child {
				node[end, label=right:
				{$WTP\in [\$x,\$2x)$}] {}
				edge from parent
				node[above] {}
				node[below]  {No}
			}
			child {
				node[end, label=right:
				{$WTP\in [\$2x,\$50]$}] {}
				edge from parent
				node[above] {Yes}
				node[below]  {}
			}
			edge from parent         
			node[above] {Yes}
			node[below]  {}
		};
		\end{tikzpicture}
	\end{figure}
	
	The survey design is simple and sufficient for the purpose of illustration. Not surprisingly, there are likely other factors that can affect a consumer's WTP for a cup of coffee. In this design, these other factors are implicitly controlled. However, one can easily revise the design to suit different purposes of research. For example, if a firm intends to investigate how a competitor's price affects the demand for its product, it can incorporate the competitor's price in the survey design, and solicit consumers' WTP for its product at different levels of competitor prices.
	
	\begin{figure}
		\caption{Distribution of WTP by JCU Students and Staff}
		\includegraphics[width=0.8\textwidth]{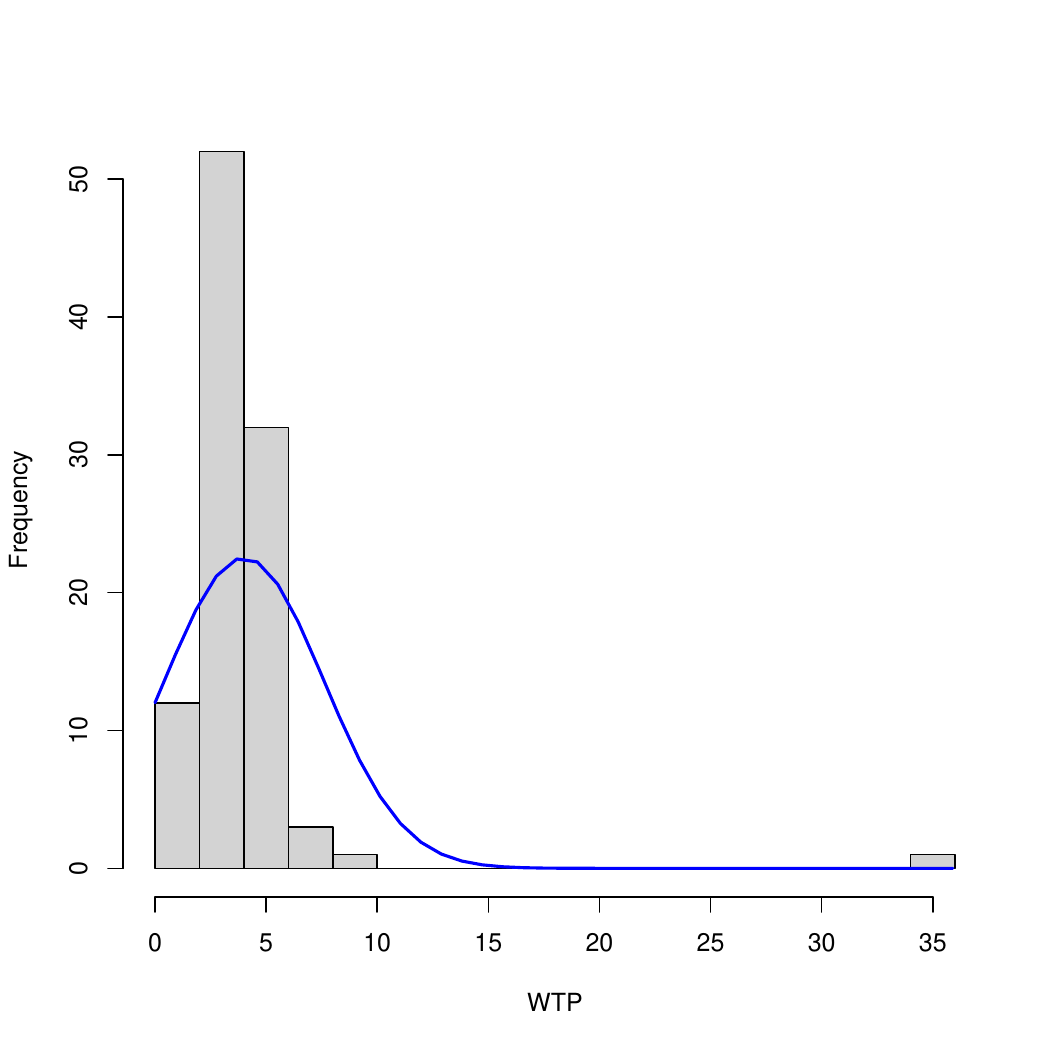}
		\centering
	\end{figure} 
	
	Figure 4 reports the distribution of WTP for a cup of coffee (medium-sized long black) at James Cook University ($n=101$). It can be observed that all interviewees, except one, are willing to pay less than \$10 for a cup of medium-sized long back, with the biggest proportion willing to pay between \$4 and \$6. In addition, the distribution appears not to resemble a normal distribution.
	
	\begin{table}[!htbp] \centering 
		\caption{Estimation Results} 
		\label{} 
		\small 
		\begin{tabular}{@{\extracolsep{-15pt}}lD{.}{.}{-3} D{.}{.}{-3} D{.}{.}{-3} } 
			\\[-1.8ex]\hline 
			\hline \\[-1.8ex] 
			& \multicolumn{3}{c}{\textit{Dependent variable:}} \\ 
			\cline{2-4} 
			\\[-1.8ex] & \multicolumn{1}{c}{q} & \multicolumn{2}{c}{log(q)} \\ 
			\\[-1.8ex] & \multicolumn{1}{c}{(1)} & \multicolumn{1}{c}{(2)} & \multicolumn{1}{c}{(3)}\\ 
			\hline \\[-1.8ex] 
			price & -157.023^{***} &  & -0.550^{***} \\ 
			& (38.581) &  & (0.041) \\ 
			& & & \\ 
			log(price) &  & -0.487^{***} &  \\ 
			&  & (0.138) &  \\ 
			& & & \\ 
			Constant & 3,029.869^{***} & 7.867^{***} & 9.410^{***} \\ 
			& (261.742) & (0.215) & (0.171) \\ 
			& & & \\ 
			\hline \\[-1.8ex] 
			Observations & \multicolumn{1}{c}{44} & \multicolumn{1}{c}{42} & \multicolumn{1}{c}{43} \\ 
			R$^{2}$ & \multicolumn{1}{c}{0.283} & \multicolumn{1}{c}{0.237} & \multicolumn{1}{c}{0.812} \\ 
			Adjusted R$^{2}$ & \multicolumn{1}{c}{0.266} & \multicolumn{1}{c}{0.218} & \multicolumn{1}{c}{0.807} \\ 
			\hline 
			\hline \\[-1.8ex] 
			\textit{Note:}  & \multicolumn{3}{r}{$^{*}$p$<$0.1; $^{**}$p$<$0.05; $^{***}$p$<$0.01} \\ 
		\end{tabular} 
	\end{table}    
	
	\begin{figure}
		\caption{Demand for Coffee}
		\includegraphics[width=0.8\textwidth]{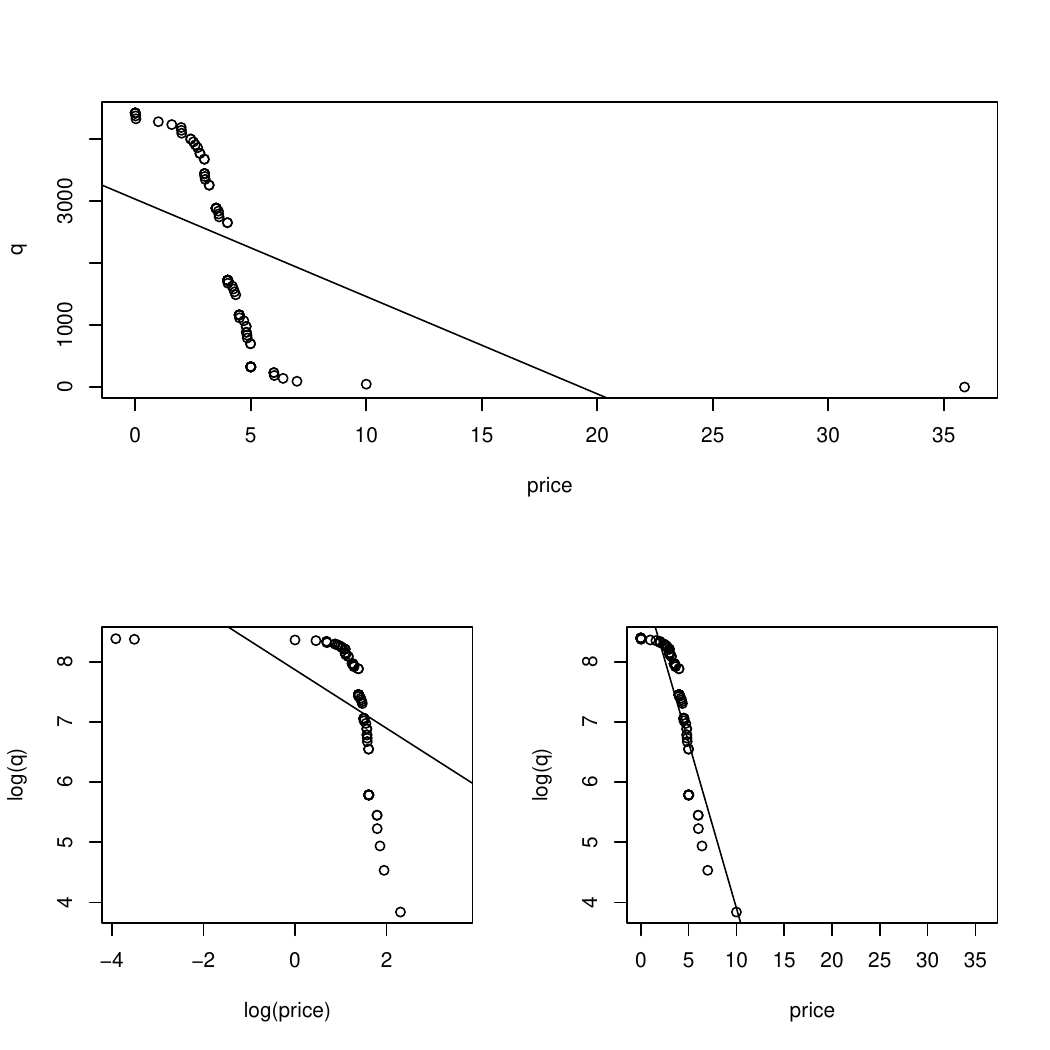}
		\centering
	\end{figure}
	
	Table 3 reports the regression results, and Figure 5 presents the scatter plots with fitted lines, where the number of consumers ($L$) is set to 4698. For the linear demand function, the coefficient of price is estimated to be -157.023, namely a one dollar increase in the price resulting in a reduction of demand by around 157 cups. For the logarithmic linear demand function, the price elasticity of demand (coefficient of $\log(price)$) is estimated to be -0.487. The inelastic demand for coffee is intuitive since for many coffee is a daily necessity. The estimation of the semi-logarithmic linear demand function finds the coefficient of price to be -0.550, namely a one dollar increase in coffee price leads to 55 percent decrease in the demand. In all estimations, the coefficient estimates are statistically significant at the one percent level. Comparing the three estimations, it appears that the semi-logarithmic linear model fits the data well in that it has the highest adjusted $R^2$.
	
	\section{Concluding Remarks}
	Taking WTP as primitive, this paper proposes a demand function for a goods that is analytically simple and can be easily used in economic modelling of more complex situations. In addition, the study also establishes an estimation procedure that is easy to be implemented, can estimate the underlying parameters consistently, and is robust to mis-reporting of WTP. The estimation procedure can be applied in a number of occasions. 
	
	One is for economics education. In a class room environment, a lecturer can quickly survey students' WTP for a goods, for example coffee. With the survey data and the help from a spreadsheet program, such as Microsoft Excel, the lecturer can demonstrate the concept of demand in an intuitive way. By construction, the Law of Demand is guaranteed to be satisfied. The lecturer can also design a project that requires students to conduct a survey to estimate the demand function for a particular goods.
	
	A second situation that the estimation procedure can be applied is for firms to estimate the demand for their new products, the focus of this paper. In a firm's marketing research on launching a new product, there are no observational data for demand estimation due to the fact that the product is new to the market. Hence, estimating the demand function through soliciting consumers' WTP is feasible and particularly attractive. 
	
	A third situation is for the non-market valuation. For example, one can estimate a demand function for reduction of CO2 emission. Knowledge of such a demand function allows policymakers to assess the welfare impact of their emission reduction plan.
	
	Given the focus on new products, the demand function is necessarily static. In addition, in many cases, consumers' purchase decisions are static, for example decision on whether to purchase a cup of coffee in a day. However there are situations where such purchase decision is more or less dynamic, particularly if the goods is durable, such as purchase of a car. The dynamic demand function, using WTP as model primitive, is a direction for future research.
	
	\newpage
	
	\printbibliography
	
	\newpage
	\appendix
	\section{Survey Questionnaire}
	\begin{figure}[H]
		\captionsetup{labelformat=empty}
		\caption{}
		\includegraphics[width=1\textwidth]{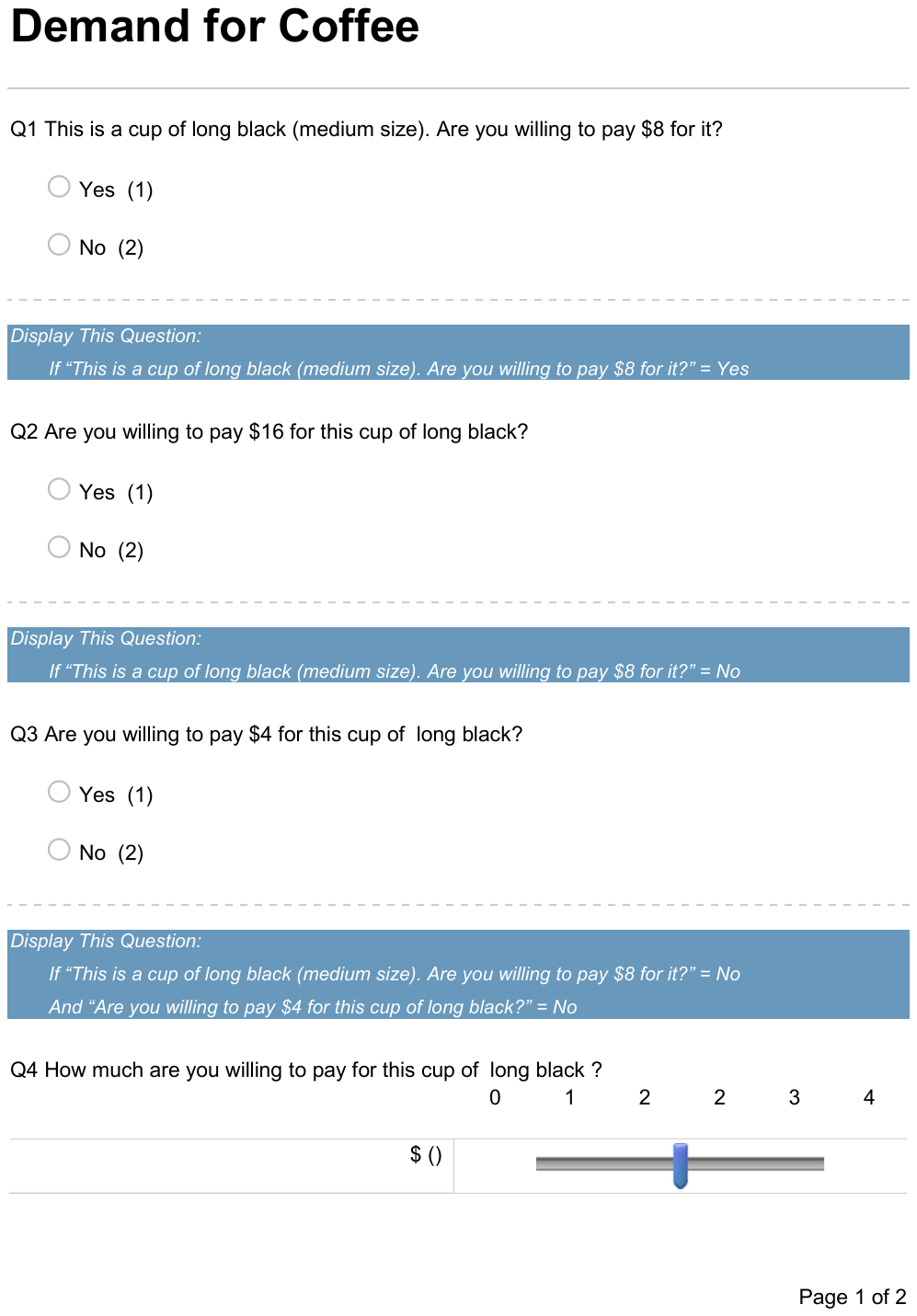}
		\centering
	\end{figure}
	\newpage
	\begin{figure}[t]
		\captionsetup{labelformat=empty}
		\caption{}
		\includegraphics[width=1\textwidth]{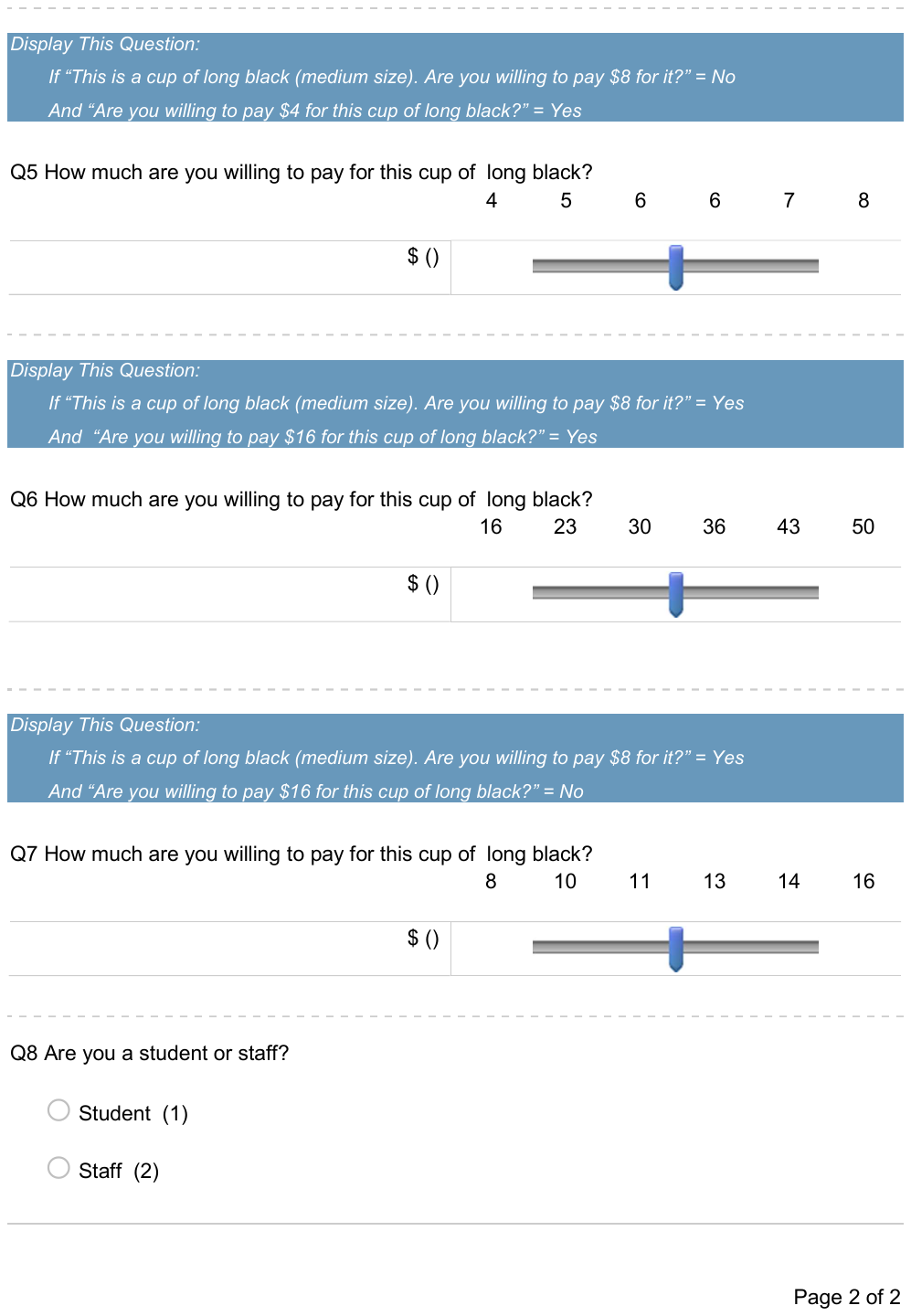}
		\centering
	\end{figure}
\end{document}